\pgfplotsset{compat=1.15}
\tikzstyle{state}+=[minimum size = 6mm, inner sep=0,outer sep=1]
\colorlet{disabled}{lightgray}
\tikzstyle{state}=[draw,rectangle,inner sep=5pt,rounded corners=2pt]
\tikzstyle{action}=[font=\small,inner sep=0pt,outer sep=3pt]
\tikzstyle{actionnode}=[circle,draw=black,fill=black,minimum size=1mm,inner sep=0,outer sep=0]
\tikzstyle{actionedge}=[draw,-]
\tikzstyle{prob}=[font=\scriptsize,inner sep=0pt,outer sep=1pt]
\tikzstyle{probedge}=[draw,->]
\tikzstyle{directedge}=[draw,->]
\tikzset{chainarrow/.tip={Stealth[length=3pt]}}
\tikzset{>=chainarrow}
\newcommand{\eps}{\varepsilon}
\newcommand{\xrl}{X_{\leftarrow}}
\newcommand{\xlr}{X_{\rightarrow}}
\newcommand{\opt}{OPT }
\newcommand{\accAll}{M_{\max}}
\newcommand{\problem}{weighted packet selection for a link }
\newcommand{\Problem}{Weighted packet selection for a link }
\newcommand{\capacitySum}{M}
\newcommand{\surre}{\phi_R}
\newcommand{\surac}{\phi_A}
\newcommand{\undecided}{U}
\newcommand{\arat}{1 + \sqrt{3}} 
\newcommand{\hibu}{\sqrt{3}-1} 
\newcommand*\calO{\mathcal{O}}
\title{Weighted Packet Selection for Rechargeable Links: Complexity and Approximation}
\author{Stefan Schmid}{Technische Universität Berlin, Germany}{}{}{Research supported by the European Research Council (ERC), grant agreement No. 864228
(AdjustNet), Horizon 2020, 2020-2025}{}{}
\author{Jakub Svoboda}{Institute of Science and Technology, Austria}{}{}{}{}{}{}
\author{Michelle Yeo}{Institute of Science and Technology, Austria}{}{}{}{}{}{}
\date{April 2022}
\authorrunning{S. Schmid, J. Svoboda, and M. Yeo}
\titlerunning{Packet Selection for Rechargeable Links}
\keywords{network algorithms, approximation algorithms, complexity, cryptocurrencies} 
\begin{document}

\maketitle
\begin{abstract}
We consider a natural problem dealing with weighted packet selection across a rechargeable link, which e.g., finds applications in cryptocurrency networks.
The capacity of a link $(u,v)$ is determined by how much players $u$ and $v$ allocate for this link. 
Specifically, the input is a finite ordered sequence of packets that arrive in both directions along a link. 
Given $(u, v)$ and a packet of weight $x$ going from $u$ to $v$, player $u$ can either accept or reject the packet.
If player $u$ accepts the packet, their capacity on link $(u,v)$ decreases by $x$. Correspondingly, player $v$’s capacity on $(u,v)$ increases by $x$. 
If a player rejects the packet, this will entail a cost linear in the weight of the packet. 
A link is ``rechargeable" in the sense that the total capacity of the link has to remain constant, but the allocation of capacity at the ends of the link can depend arbitrarily on players' decisions.
The goal is to minimise the sum of the capacity injected into the link and the 
cost of rejecting packets. 
We show the problem is NP-hard, but can be approximated efficiently with a ratio of $(1+ \eps)\cdot (\arat)$ for some arbitrary $\eps >0$. 
\end{abstract}

\section{Introduction}\label{sec:intro}

This paper considers a novel and natural throughput optimization problem where the goal is to maximise the number of packets routed through a network. The problem variant comes with a twist: link capacities are ``rechargeable'', which is motivated, e.g., by payment-channel networks (more motivation will follow).

We confine ourselves to a single capacitated network link and consider a finite ordered sequence of packet arrivals in both directions along the link.
This can be modelled by a graph that consists of a single edge between two vertices $u$ and $v$, where $b_u$ and $b_v$ represent the capacity $u$ and $v$ injects into the edge respectively. 
Each packet in the sequence has a weight/value and a direction (either going from $u$ to $v$, or from $v$ to $u$). 
When $u$ forwards a packet going in the direction $u$ to $v$, $u$'s capacity $b_u$ decreases by the packet weight and $v$'s capacity $b_v$ correspondingly increases by the packet weight (see \Cref{fig:example} for an example). 
Player $u$ can also reject to forward a packet, incurring a cost linear in the weight of the packet.
The links we consider are rechargeable in the sense that the total capacity $b_u + b_v$ of the link can be arbitrarily distributed on both ends, but the total capacity of the link cannot be altered throughout the lifetime of the link. 
Given a packet sequence, our goal is to minimise the sum of the cost of rejecting packets and the amount of capacity allocated to a link. 

Here we stress a crucial difference between our problem and problems on optimising flows and throughput in typical capacitated communication networks~\cite{chekuri2004all,raghavan1985provably}. 
In traditional communication networks, the capacity is usually independent in the two directions of the link~\cite{Gupta2000TheCO}. 
In our case, however, the amount of packets $u$ sends to $v$ in a link $(u,v)$ directly affects $v$'s capability to send packets, as each packet $u$ send to $v$ increases $v$'s capacity on $(u,v)$.

We start with a description of rechargeable links, then explain the actions players can take and corresponding costs. We subsequently motivate and explain our problem with a real world example of routing payments in cryptocurrency networks. Finally, we state our main results.

\subparagraph{Rechargeable links.} One unique aspect of our problem is that the links we consider are rechargeable. Rechargeable links are links that satisfy the following properties:

\begin{enumerate}
    \item Given a link $(u,v)$ with total capacity $\capacitySum$, the capacity can be arbitrarily split between both ends based on the number and weight of packets processed by $u$ and $v$. 
    That is, $b_u$ and $b_v$ can be arbitrary as long as $b_u + b_v = \capacitySum$.
    \item The total capacity of a link is invariant throughout the lifetime of the link. 
    That is, it is impossible for players to add to or remove any part of the capacity in the link. 
    In particular, if a player is incident to more than one link in the network, the player cannot transfer part of their excess capacity in one link to ``top up" the capacity in another one. 
\end{enumerate}

\begin{figure}[htb!]
    \centering
    \includegraphics[]{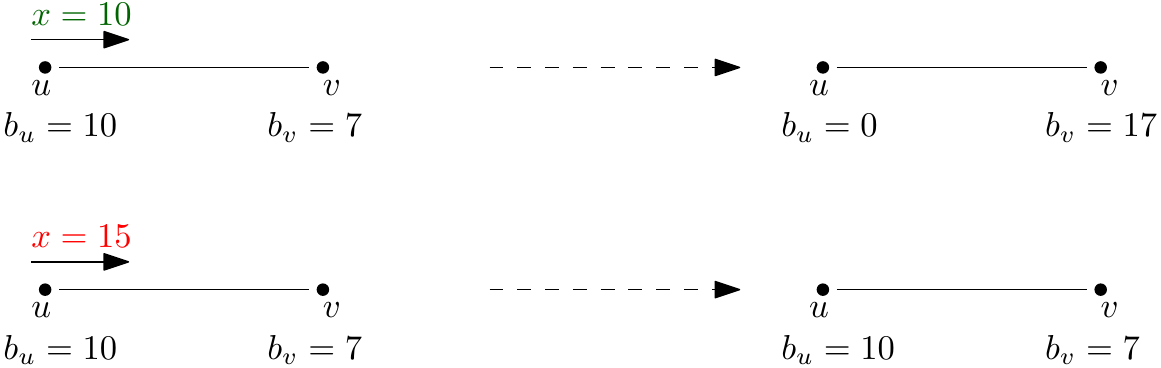}
    \caption{The diagram on the top shows the outcome of $u$ successfully processing a packet $x$ of weight $10$ along the link $(u,v)$. The subsequent capacities of $u$ and $v$ are $0$ and $17$ respectively. The diagram on the bottom shows the outcome where, even though the total capacity of the $(u,v)$ link is $17$, $u$'s capacity of $10$ on $(u,v)$ is insufficient to forward a packet $x$ of weight $15$. As such, the subsequent capacities of $u$ and $v$ on the link $(u,v)$ remain the same.}
    \label{fig:example}
\end{figure}

\subparagraph{Player actions and costs.}
First, we note that creating a link incurs an initial cost of the amount the player allocates in the link. That is, if player $u$ allocates $b_u$ in link $(u,v)$, the cost of creating the link $(u,v)$ for $u$ would be $b_u$. 
Consider a link $(u,v)$ in the network and a packet going from $u$ to $v$ along the edge. Player $u$ can choose to do the following to the packet:
\begin{itemize}
    \item \textbf{Accept packet.} Player $u$ can accept to forward the packet if their capacity in $(u,v)$ is at least the weight of the packet. The result of doing so decreases their capacity by the packet weight and increases the capacity of $v$ by the packet weight. Note that apart from gradually depleting a player's capacity, accepting the packet does not incur any cost. 
    \item \textbf{Reject packet.} Player $u$ can also reject the packet. This could happen if $u$'s capacity is insufficient, or if accepting the packet would incur a larger cost in the future. For a packet of weight $x$, the cost of rejecting the packet is $f \cdot x + m$ where $f,m \in \mathbb{R}^+$.
\end{itemize}
We note that player $u$ does not need to take any action for packets going in the opposite direction (i.e. from $v$ to $u$) as these packets do not add anything to the cost of $u$. See \Cref{sec:definitions} for the formal details of packets.

\subparagraph{Practical motivation.} The primary motivating example of our model is payment channel networks~\cite{decker2015fast,csur21crypto} supporting cryptocurrencies \cite{raiden,poon2015lightning}. These networks are used to route payments of some amount (i.e. weighted packets in our model) between any two users of the network. Channels (i.e. links in our model) are capacitated, which can limit transaction throughput and hence may require selection and recharging \cite{AvarikiotiPSSTY21,KhalilG17}.  Intermediate nodes on a payment route typically also charge a fee for forwarding payments that is linear in the amount of the payment. Hence, if they reject to forward a payment, they would lose out on profiting from this fee and thereby incur the fee amount as opportunity cost. Channels in payment channel networks are also rechargeable for security reasons, see \cite{poon2015lightning} for more details. 

\subparagraph{Our contributions.}
We introduce the natural weighted packet selection problem and show that it is NP-hard by a reduction from subset sum. Our main contribution is an efficient constant-factor approximation algorithm. We further initiate the discussion of how our approach can be generalized from a single link to a more complex network.

\subparagraph{Organisation.}
\Cref{sec:definitions} introduces the requisite notations and definitions we use in our paper. 
\Cref{sec:building} provides the necessary algorithmic building blocks we use to construct our main algorithm. 
In \Cref{sec:approx}, we present our main approximation algorithm and prove that it achieves an approximation ratio of $(1+\eps)(\arat)$ for the \problem problem in \Cref{thm:main}. 
We show that \problem is NP-hard in \Cref{sec:hardness}.
Finally, we provide some generalisations of our algorithm from a single link to a larger network in \Cref{sec:extensions}. 
We conclude our work by discussing future directions in \Cref{sec:future}.

\section{Notation and definitions}\label{sec:definitions}
Let $(u,v)$ be link.
We denote an ordered sequence of packets by $X_t = (x_1, \dots, x_t)$.
Each packet $x_i \in X_t$ has a weight and a direction.
We simply use $x_i \in \mathbb{R}^+$ to denote the weight of the packet $x_i$.
We say a packet $x_i$ goes in the left to right direction (resp. right to left) if it goes from $u$ to $v$ (resp. from $v$ to $u$).

Let $\xlr$ denote the subsequence of $X_t$ that consist of packets going from left to right and $\xrl$ the subsequence of $X_t$ that consist of packets going from right to left.

Let $x_{\min}$ be the weight of the packet with the smallest weight in $X_t$ and $\accAll$ be the capacity needed to accept all packets.
The value of $\accAll$ for $X_t$ is easy to compute in time $\calO(t)$ and it is upper bounded by the sum of the weight of all packets.

Let $\opt$ be the cost of the optimal algorithm and $\opt_{\capacitySum}$ be the cost of the optimal algorithm using a capacity of $\capacitySum$ in the link.
Moreover, we use $\opt^R$ to denote the cost of the optimum for rejecting packets and $\opt^C$ to denote the cost for the capacity.
Similarly, we use $\opt^R_{\capacitySum}$ to denote the cost for rejecting of the optimum using capacity $\capacitySum$ in the link
(note that $\opt^C_{\capacitySum} = \capacitySum$).

Finally, for an integer $t \ge 1$, we use $[t]$ to denote $\{1, \dots, t\}$. 

\section{Preliminary insights and algorithmic building blocks}\label{sec:building}
We start our investigation of the \problem problem by describing a procedure to approximate the optimal capacity in a link.
We describe a linear program that fractionally accepts packets (i.e. part of a packet can be accepted) given a fixed link capacity $\capacitySum$.
Then, we introduce a simple algorithm that requires twice as much link capacity compared to the capacity required in the linear program.
However, the algorithm guarantees that every packet accepted fully (i.e. the entire packet was accepted) by the linear program is also accepted by the algorithm. 
This controls the cost of the algorithm: the fully accepted packets do not increase the cost since the solution of the linear program is a lower bound on the optimal cost.

\subsection{Approximating the optimal capacity}\label{sec:capacity}
We present a lemma that allows us to fix the capacity to $\capacitySum$ for a small trade-off in the approximation ratio.
We fix some $\eps >0$ and perform a search on all $k\in \mathbb{N}$ such that $x_{\min}(1+\eps)^k \le \accAll$.

Observe that if the optimal capacity is not $0$, it is at least $x_{\min}$, the weight of the smallest packet; and at most $\accAll$, the capacity needed to accept all packets.

\begin{lemma}\label{Lemma: Capacity Guess}
For any $\eps > 0$, let $\mathcal{M} = \{x_{\min}(1+\eps)^k | k \in \mathbb{N} \text{ and } x_{\min}(1 + \eps)^k \le \accAll \}\cup\{0\}$. Then, the following inequality holds for any $LB_{\capacitySum} \le \opt_{\capacitySum}^R$
  \[
    \min_{\capacitySum \in \mathcal{M} } \left( LB_{\capacitySum} + \frac{\capacitySum}{1+\eps} \right) \le \opt
  \]
\end{lemma}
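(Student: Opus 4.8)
The plan is to exhibit a single capacity $\capacitySum \in \mathcal{M}$ whose term already lies below $\opt$. Since $LB_{\capacitySum} \le \opt^R_{\capacitySum}$ for every $\capacitySum$, whichever witness I select satisfies $LB_{\capacitySum} + \frac{\capacitySum}{1+\eps} \le \opt^R_{\capacitySum} + \frac{\capacitySum}{1+\eps}$, so it suffices to find one $\capacitySum \in \mathcal{M}$ with $\opt^R_{\capacitySum} + \frac{\capacitySum}{1+\eps} \le \opt$; the general $LB_{\capacitySum}$ then follows for free. Write $\capacitySum^*$ for the capacity an optimal solution uses, so that $\opt = \opt^R_{\capacitySum^*} + \capacitySum^*$.

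First I would dispose of the degenerate case $\capacitySum^* = 0$: here $0 \in \mathcal{M}$ and $\opt^R_0 + \frac{0}{1+\eps} = \opt^R_0 = \opt$, so the witness $\capacitySum = 0$ works. Otherwise, by the observation preceding the lemma, $x_{\min} \le \capacitySum^* \le \accAll$, so $\capacitySum^*$ lies in the range the geometric grid $\mathcal{M}$ is built to cover. The heart of the argument is then to round $\capacitySum^*$ \emph{up} to the least grid point $\capacitySum \in \mathcal{M}$ with $\capacitySum \ge \capacitySum^*$, after which two facts combine. Because consecutive grid points differ by a factor $(1+\eps)$, this $\capacitySum$ obeys $\capacitySum \le (1+\eps)\capacitySum^*$, hence $\frac{\capacitySum}{1+\eps} \le \capacitySum^*$: the discount factor in the statement is precisely what absorbs the overshoot from rounding up. Second, I would use that $\opt^R_{\capacitySum}$ is non-increasing in $\capacitySum$, which I would justify by a feasibility-embedding argument rather than by manipulating rejection costs directly: any solution feasible with total capacity $\capacitySum^*$ stays feasible once the total capacity is enlarged to $\capacitySum \ge \capacitySum^*$, since the surplus only relaxes the constraints on which packets can be accepted. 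Thus $\opt^R_{\capacitySum} \le \opt^R_{\capacitySum^*}$, and adding the two inequalities gives $\opt^R_{\capacitySum} + \frac{\capacitySum}{1+\eps} \le \opt^R_{\capacitySum^*} + \capacitySum^* = \opt$, as required.

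The step I expect to be the main obstacle is the upper boundary of the grid. Rounding up produces a valid witness only if $\mathcal{M}$ truly contains a point $\ge \capacitySum^*$; since $\capacitySum^* \le \accAll$ this is automatic \emph{except} when $\capacitySum^*$ falls strictly between the largest power $x_{\min}(1+\eps)^{k} \le \accAll$ and $\accAll$ itself. To cover this corner I would take $\capacitySum = \accAll$ as the witness: it accepts every packet, so $\opt^R_{\accAll} = 0 \le \opt^R_{\capacitySum^*}$, while $\frac{\accAll}{1+\eps} < \capacitySum^*$ follows from the chain $x_{\min}(1+\eps)^{k+1} > \accAll \ge \capacitySum^* > x_{\min}(1+\eps)^{k}$, which yields $\frac{\accAll}{1+\eps} < x_{\min}(1+\eps)^{k} < \capacitySum^*$. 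Making this fully clean requires treating $\accAll$ as an available capacity — equivalently, ensuring the candidate set reaches $\accAll$ — and carefully checking this last inequality; the only other non-routine ingredient is the monotonicity of $\opt^R_{\capacitySum}$ established above.
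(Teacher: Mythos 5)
Your proof follows essentially the same route as the paper's: dispose of $\opt^C=0$ via the grid point $0$, otherwise round $\opt^C$ up to the next point of the geometric grid so that the factor $\frac{1}{1+\eps}$ absorbs the overshoot, and invoke monotonicity of $\opt^R_{\capacitySum}$ in $\capacitySum$, which the paper justifies by the same feasibility-embedding observation ("the same set of packets can be accepted with a larger capacity"). The boundary issue you flag is genuine and is in fact glossed over in the paper: its proof sets $\capacitySum = x_{\min}(1+\eps)^k$ for the $k$ with $x_{\min}(1+\eps)^{k-1}\le \opt^C \le x_{\min}(1+\eps)^k$ without checking that this point satisfies the membership condition $x_{\min}(1+\eps)^k\le \accAll$, so strictly speaking $\mathcal{M}$ must be enlarged by one further grid point (or by $\accAll$ itself) exactly as you propose; with that amendment your argument is complete and, if anything, more careful than the original.
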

\begin{proof}
  If $\opt$ rejected all packets, we know that $LB_{0} \le \opt_{0}^{R} = \opt_0 = \opt$, so the inequality holds.

  Now, suppose that $\opt$ accepted at least one packet.
  That means $\opt^C \ge x_{\min}$.
  So there exists $k$ such that $x_{\min} (1+\eps)^{k-1} \le \opt^C \le x_{\min} (1+\eps)^k$.
  We set $\capacitySum = x_s(1+\eps)^k$ and prove that $LB_{\capacitySum} + \frac{\capacitySum}{1+\eps} \le \opt^R + \opt^C$.
  From the choice of $\capacitySum$, we know that $\frac{M}{1+\eps} \le \opt^C$.

  Observe that for $\capacitySum' \ge \capacitySum$ holds $\opt_{\capacitySum'}^R \le \opt_{\capacitySum}^R$.
  In the worst case, the same set of packets can be accepted with a larger capacity.
  And since $\opt^C \le \capacitySum$, it means $\opt^R \ge \opt_{\capacitySum}^R \ge LB_{\capacitySum}$.
\end{proof}

In \Cref{sec:approx}, we describe an algorithm that is a $(\arat)$-approximation of $LB_{\capacitySum}$.
Together with \Cref{Lemma: Capacity Guess}, we can use this algorithm to approximate the whole problem with a ratio of $(1+\eps)(\arat)$ by running the algorithm at most $\frac{1}{\eps}\log \frac{\accAll}{x_{\min}}$ times.
We note that choosing a smaller value of $\eps$ yields a better approximation, but increases the running time.

\subsection{Linear program formulation}
Here, we describe a linear program that computes a lower bound for $\opt^R_{\capacitySum}$.

Observe that $\opt_{\capacitySum}$ rejects packets with weight larger than $\capacitySum$.
For the rest of the analysis, we assume that all packets in $X_t$ have weight smaller than $ \capacitySum$.

In the linear program, we create a variable $0 \le y_i \le x_i$ for every packet $x_i \in X_t$ that represents the extent to which the packet is accepted ($y_i = \frac{x_i}{2}$ means that half of $x_i$ is accepted).
We introduce variables $S_{L,i}$ and $S_{R,i}$ denoting the capacity on the left and right ends of the link after processing first $i$ packets from $X_t$.
We know that $S_{L,i} + S_{R,i} = \capacitySum$, and $ 0\le S_{L,i}, S_{R,i} \le \capacitySum$.

Now we can formulate the linear program in eq.~\eqref{eq:big}:
\begin{align}\label{eq:big}
  &\text{minimise} \quad \sum_{i} f(x_i - y_i) + m\frac{x_i - y_i}{x_i}\\
  &\text{subject to} \quad \forall i : y_i, S_{L,i},S_{R,i} \ge 0 \notag\\
                         &\quad\quad\quad\quad\quad  \forall i : y_i \le x_i \notag \\
                         &\quad\quad\quad\quad\quad  \forall i : S_{L,i} + S_{R,i} = M \notag\\
                         &\quad\quad\quad\quad\quad  \forall x_i \in \xlr: S_{L,i} = S_{L,i-1} - y_i \notag\\
                         &\quad\quad\quad\quad\quad  \forall x_i \in \xlr: S_{R,i} = S_{R,i-1} + y_i \notag\\
                         &\quad\quad\quad\quad\quad  \forall x_i \in \xrl: S_{L,i} = S_{L,i-1} + y_i \notag\\
                         &\quad\quad\quad\quad\quad  \forall x_i \in \xrl: S_{R,i} = S_{R,i-1} - y_i \notag
\end{align}

Let $LP_{\capacitySum}$ be the solution of the linear program with capacity parameter $\capacitySum$. \Cref{lem:lb} states that $LP_{\capacitySum}$ is a lower bound of the optimal cost of the \problem problem with capacity $\capacitySum$.
\begin{lemma}\label{lem:lb}
  For all $\capacitySum$, $\opt_{\capacitySum} \ge LP_{\capacitySum}$.
\end{lemma}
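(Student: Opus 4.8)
The plan is to show that the linear program in eq.~\eqref{eq:big} is nothing but a fractional relaxation of the \problem problem with the total capacity pinned to $\capacitySum$. Since an actual algorithm can only accept or reject a packet in full, the optimum $\opt_{\capacitySum}$ makes an integral decision on each packet, and such an integral run induces a feasible point of the program whose objective value equals its rejection cost. As $LP_{\capacitySum}$ is the minimum over all feasible points, this gives $LP_{\capacitySum} \le \opt^R_{\capacitySum} \le \opt_{\capacitySum}$, where the last inequality uses $\opt_{\capacitySum} = \opt^R_{\capacitySum} + \capacitySum$ together with $\opt^C_{\capacitySum} = \capacitySum$.

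First I would look at how $\opt_{\capacitySum}$ processes the sequence $X_t = (x_1, \dots, x_t)$, and let $b_{L,i}$ and $b_{R,i}$ denote the genuine capacities at the left and right ends of the link after the optimum has handled the first $i$ packets, with $b_{L,0} + b_{R,0} = \capacitySum$. From this run I read off a candidate LP point by setting $y_i = x_i$ whenever $\opt_{\capacitySum}$ accepts $x_i$ and $y_i = 0$ whenever it rejects $x_i$, and by setting $S_{L,i} = b_{L,i}$ and $S_{R,i} = b_{R,i}$.

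Next I would verify feasibility of this point. The box constraints $0 \le y_i \le x_i$ hold trivially for the two chosen values. For the capacity recurrences, each acceptance of a left-to-right packet $x_i \in \xlr$ lowers the left capacity by exactly $x_i = y_i$ and raises the right capacity by the same amount, matching $S_{L,i} = S_{L,i-1} - y_i$ and $S_{R,i} = S_{R,i-1} + y_i$; the symmetric statement holds for $x_i \in \xrl$, and a rejection leaves both ends unchanged, which agrees with $y_i = 0$. The physical requirement that a player never spends more capacity than it holds is precisely $b_{L,i}, b_{R,i} \ge 0$, so together with $b_{L,i} + b_{R,i} = \capacitySum$ the constraints $S_{L,i}, S_{R,i} \ge 0$ and $S_{L,i} + S_{R,i} = \capacitySum$ are all met. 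Hence the constructed point is feasible for eq.~\eqref{eq:big}.

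Finally I would compare objective values. For a packet accepted in full, $y_i = x_i$, so its term $f(x_i - y_i) + m\frac{x_i - y_i}{x_i}$ vanishes; for a rejected packet, $y_i = 0$, so its term equals $f x_i + m$, which is exactly the rejection cost charged by $\opt_{\capacitySum}$. Summing over all packets shows the LP objective at our point equals $\opt^R_{\capacitySum}$, and minimality of $LP_{\capacitySum}$ then yields the claim. The main thing to get right is the direction-dependent bookkeeping: one must check that the optimum's capacity trajectory obeys the recurrences and the nonnegativity bounds simultaneously for both $\xlr$ and $\xrl$ packets, since it is these bounds — not the objective — that encode when an acceptance is actually admissible in the original problem.
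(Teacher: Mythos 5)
Your proof is correct and takes essentially the same route as the paper, which simply observes in two sentences that the integral optimum induces an admissible point of the linear program whose objective equals its rejection cost, so the LP minimum can only be smaller. Your version just spells out the feasibility check and the bookkeeping $LP_{\capacitySum} \le \opt^R_{\capacitySum} \le \opt_{\capacitySum}$ that the paper leaves implicit.
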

\begin{proof}
  The solution of $\opt_{\capacitySum}$ is an admissible solution to the linear program.
  If some other (fractional) solution is found, we know that it is at most $\opt_{\capacitySum}$.
\end{proof}

The linear program can be solved in time $\mathcal{O}(n^{\omega})$ where $n$ is the number of variables in the linear program and $\omega$ the matrix multiplication exponent~\cite{CohenLS21} (currently $\omega$ is around $2.37$).

\subsection{Processing fully accepted packets}

Given the solution of the linear program with capacity $\capacitySum$, we describe an algorithm that: uses capacity $2\capacitySum$; accepts all packets that were fully accepted ($x_i = y_i$) by the linear program; and accepts some fractionally accepted packets.

\Cref{Algorithm: Accept Full} describes the decision making process only for packets coming from left to right on the link, i.e. $\xlr$.
The algorithm takes as input the solution to the linear program and the packet sequence $X_t$.
Recall that  $S_{L,i}$ and $S_{R,i}$ for $i \in [t]$ are the capacity distributions on the left and right end of the link after processing the $i$th packet from the linear program solution.
First, we set $R_{L}$ and $R_{R}$ be $\frac{M}{2}$ each and we start with $S_{L,0} + R_{L}$ capacity on the left and $S_{R,0}+ R_{R}$ capacity on the right of the link. 
Intuitively, one can think of the additional $\capacitySum$ capacity in $R_{L}$ and $R_{R}$ as an excess capacity to accept fractionally accepted packets or balance change in $S_{L,i}$ and $S_{R,i}$ in the case of packet rejection.
We stress that the algorithm always uses the capacity in $S_{L,i}$ and $S_{R,i}$ to accept the fractional portion of the $i$th packet. 
The remainder capacity comes from $R_{L}$ and $R_{R}$.

Since the problem is symmetric, one can simply swap $S_{L,i}$ for $S_{R,i}$, $\xlr$ for $\xrl$, and $R_L$ for $R_R$ in \Cref{Algorithm: Accept Full} to get decisions for packets from $\xrl$.

\begin{algorithm}[t]
  \begin{algorithmic}[1]
    \Require packet sequence $X_t$, capacity $\capacitySum$, solution of LP: $S_{L,i}, S_{R,i}, y_i$.
    \Ensure decisions to accept or reject
    \State $R_{L} = \frac{\capacitySum}{2}$, $R_{R} = \frac{\capacitySum}{2}$
    \For{ $i \in [t]$}
     \If {$x_i \in \xlr$}
      \If {$R_L \ge x_i - y_i$}
        \State \textbf{Accept}
        \State $R_L = R_L - (x_i - y_i)$
        \State $R_R = R_R + (x_i - y_i)$
      \Else
        \State \textbf{Reject}
        \State $R_L = R_L + y_i$
        \State $R_R = R_R - y_i$
      \EndIf
     \EndIf
    \EndFor
	\end{algorithmic}
  \caption{Algorithm accepting all fully accepted packets}
  \label{Algorithm: Accept Full}
\end{algorithm}

\begin{lemma}\label{Lemma: Accept Full}
  Given the solution of the linear program, a sequence of packets $X_t$, and link capacity $\capacitySum$,
  \Cref{Algorithm: Accept Full} uses capacity $2\capacitySum$ and accepts all packets fully accepted in the linear program.
\end{lemma}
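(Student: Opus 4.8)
The plan is to prove two invariants that together give both claims. The first, immediate invariant is that $R_L + R_R = \capacitySum$ after every iteration: each of the four update rules (accept or reject, for an $\xlr$ or an $\xrl$ packet) subtracts a quantity from one remainder account and adds the same quantity to the other, so the sum never changes, and it starts at $\frac{\capacitySum}{2} + \frac{\capacitySum}{2} = \capacitySum$. Combined with the LP constraint $S_{L,i} + S_{R,i} = \capacitySum$, this immediately shows that the total link capacity the algorithm maintains is $(S_{L,i} + R_L) + (S_{R,i} + R_R) = 2\capacitySum$ at all times, which will give the first half of the statement once feasibility is checked.

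The second and central invariant is $0 \le R_L \le \capacitySum$ (equivalently, via the first invariant, $0 \le R_R \le \capacitySum$), which I would prove by induction on the packet index. The base case holds since $R_L = \frac{\capacitySum}{2}$. For the inductive step I would split into the four cases. The two accept cases are easy: the acceptance guard $R_L \ge x_i - y_i$ (resp.\ $R_R \ge x_i - y_i$) gives non-negativity of the decreased account directly, while the opposite account only grows and stays at most $\capacitySum$ by the first invariant. The two reject cases are the crux, since there the account feeding the opposite direction is decreased and I must rule out that it becomes negative. The key observation is that a rejection occurs exactly when the guard fails, e.g.\ $R_L < x_i - y_i$ for an $\xlr$ packet; because we assume every packet satisfies $x_i < \capacitySum$, this forces the updated $R_L$ to stay strictly below $\capacitySum$ and, through $R_L + R_R = \capacitySum$, forces the decreased opposite account to stay strictly above $\capacitySum - x_i > 0$. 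Writing this out symmetrically for $\xlr$ and $\xrl$ rejections closes the induction.

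With both invariants in hand the two conclusions follow. For the capacity claim: since $S_{L,i}, S_{R,i} \ge 0$ by LP feasibility and $R_L, R_R \ge 0$ by the second invariant, the left and right capacities $S_{L,i} + R_L$ and $S_{R,i} + R_R$ are always non-negative and sum to $2\capacitySum$, so the sequence of decisions is feasible with total capacity $2\capacitySum$; one also verifies that whenever the algorithm accepts an $\xlr$ packet it genuinely has enough capacity, as $S_{L,i-1} \ge y_i$ (from $S_{L,i} = S_{L,i-1} - y_i \ge 0$) together with the guard $R_L \ge x_i - y_i$ yields left capacity $S_{L,i-1} + R_L \ge x_i$. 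For the fully-accepted packets: if $x_i = y_i$ then $x_i - y_i = 0$, so the guard becomes $R_L \ge 0$ (resp.\ $R_R \ge 0$), which is precisely the lower bound of the second invariant, and the packet is accepted.

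I expect the reject cases of the second invariant to be the main obstacle, since that is where a decision taken on one direction perturbs the remainder account that serves the opposite direction; the argument hinges on coupling the rejection guard with the standing assumption $x_i < \capacitySum$ and the conservation law $R_L + R_R = \capacitySum$, rather than on any single local inequality.
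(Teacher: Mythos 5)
Your proof is correct and follows essentially the same route as the paper's: maintain the conservation law $R_L + R_R = \capacitySum$ and non-negativity of the reserves, handle the reject case by combining the failed acceptance guard with the standing assumption $x_i < \capacitySum$, and observe that a fully accepted packet reduces the guard to $R_L \ge 0$. Your write-up is, if anything, slightly more explicit than the paper's (e.g., in verifying that the left capacity $S_{L,i-1} + R_L \ge x_i$ on acceptance), but the underlying argument is identical.
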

\begin{proof}
  After processing the $i$th packet $x_i \in \xlr$, we denote $R_L$ at that step $R_{L,i}$, similarly with $R_{R,i}$.
  We show that the channel, at time $i$, has capacity at least $S_{L,i}$ on the left and at least $S_{R,i}$ on the right.

  When $R_{L,i}$ is large enough to accept packet $x_i$, we use $y_i$ capacity from $S_{L,i}$ and $x_i - y_i$ capacity from $R_{L,i}$.
  The capacity $y_i$ from the accepted packet goes to $S_{R,i+1}$ and the rest ($x_i - y_i$) of the capacity goes to $R_{R,i+1}$.

  If the packet is forced to be rejected, we know that $R_{L,i} < x_i - y_i$.
  Since $R_{R,i} = M - R_{L,i}$, we know that $R_{R,i} > M - x_i +y_i$, and because all packets have weight smaller than $\capacitySum$, $R_{R,i} > y_i$ follows.
  This means we can take $y_i$ from $R_{R,i}$ and put add it to $S_{R,i+1}$ and remove $y_i$ from $S_{L,i}$ (because the capacity disappeared from there) to $R_{L,i+1}$.

  If the packet is fully accepted, then $x_i - y_i = 0$. Than means the condition $R_{L,i} \ge x_i - y_i$ is satisfied and the algorithm accepts it.
\end{proof}

We conclude this section with two remarks.
\begin{remark}
\Cref{Lemma: Accept Full} holds for any initial distribution of $R_{L,0}$ and $R_{R,0}$ so long as $R_{L,0} + R_{R,0} = \capacitySum$.
\end{remark}

\begin{remark}
\Cref{Algorithm: Accept Full} is greedy and accepts all packets as long as 
$R_L \ge x_i-y_i$ (which could be suboptimal as it might not have enough capacity in $R_L$ to accept important packets later in the sequence). However, to maintain the condition $R_L, R_R \ge 0$ in line $4$ of \Cref{Algorithm: Accept Full}, we can substitute the conditional check $R_L \ge x_i-y_i$ with $R_R < y_i$ at any point. Then proof of \Cref{Lemma: Accept Full} still holds.
We note that one could use this as a heuristic to develop a better approximation as it gives some control over how greedy the algorithm is.
\end{remark}

\section{A constant approximation algorithm}\label{sec:approx}

Based on the insights above, we present a $(\arat)$-approximation algorithm for \problem with fixed capacity $\capacitySum$.
The algorithm modifies \Cref{Algorithm: Accept Full} by adding capacity to $R_L$ and $R_R$, which allows us to make a wider range of decisions.

Let \emph{little-accepted} (packets) be packets for which $\frac{y_i}{x_i} < \frac{\sqrt{3}}{\arat}$ holds,
and \emph{almost-accepted} be packets for which $\frac{y_i}{x_i} \ge  \frac{\sqrt{3}}{\arat}$ holds.

The algorithm keeps the capacity reserve in $R_L$ and $R_R$ high while accepting all almost-accepted packets.
It might not always have enough capacity to do so.
We present procedures that deal with that situation.

In \Cref{Algorithm: Appx}, we again describe the decision making process for packets $x_i \in \xlr$. The decisions for packets from $\xrl$ are symmetric.

The following lemma states that we can safely reject all little-accepted packets.

\begin{lemma}\label{Lemma: Little Accepted}
  Observe that all little-accepted packets can be rejected while keeping the approximation ratio below $\arat$.
\end{lemma}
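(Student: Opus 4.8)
The plan is to argue packet by packet: I would show that for each little-accepted packet, the cost the algorithm pays by rejecting it outright is at most $\arat$ times the contribution that same packet makes to the linear program's objective, which by \Cref{lem:lb} is a lower bound on $\opt^R_{\capacitySum}$. Because this bound holds uniformly for every individual packet, summing over all little-accepted packets will show that rejecting them all inflates this portion of the solution's cost by a factor of at most $\arat$, which is exactly what the lemma asserts.

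First I would rewrite the $i$th term of the linear program's objective in a convenient closed form. Writing $r_i = y_i / x_i$ for the fraction to which the LP accepts packet $x_i$, we have
\[
  f(x_i - y_i) + m\,\frac{x_i - y_i}{x_i} \;=\; (1 - r_i)\,(f x_i + m).
\]
On the other hand, fully rejecting $x_i$ in the algorithm costs exactly $f x_i + m$. Hence the per-packet ratio between the algorithm's rejection cost and the LP's contribution is
\[
  \frac{f x_i + m}{(1 - r_i)(f x_i + m)} \;=\; \frac{1}{1 - r_i},
\]
which is independent of $f$, $m$, and the weight $x_i$ itself, so the same factor governs every packet.

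Next I would substitute the defining inequality of a little-accepted packet, $r_i < \frac{\sqrt{3}}{\arat}$. A short computation gives $1 - r_i > 1 - \frac{\sqrt{3}}{1+\sqrt{3}} = \frac{1}{1+\sqrt{3}}$, so the ratio $\frac{1}{1-r_i}$ is strictly below $1 + \sqrt{3} = \arat$. This is precisely where the threshold $\frac{\sqrt{3}}{\arat}$ has been tuned: it is the largest accepted fraction for which outright rejection still stays within the target factor $\arat$. Summing the per-packet bound over all little-accepted packets and invoking $LP_{\capacitySum} \le \opt^R_{\capacitySum}$ then shows that rejecting every little-accepted packet contributes at most $\arat$ times the corresponding lower-bound cost.

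I do not expect a genuine obstacle here, since the argument reduces to a direct calculation; the only things to get right are the algebraic simplification $1 - \frac{\sqrt{3}}{1+\sqrt{3}} = \frac{1}{1+\sqrt{3}}$ that makes the threshold match $\arat$ exactly, and the observation that the ratio $\frac{1}{1-r_i}$ is uniform across packets so that the per-packet bound survives summation into the overall cost comparison.
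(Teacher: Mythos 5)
Your proposal is correct and follows essentially the same route as the paper: the paper also bounds the per-packet LP contribution $f(x_i-y_i)+m\frac{x_i-y_i}{x_i}$ from below by $\frac{1}{\arat}(fx_i+m)$ using the little-accepted threshold, which is exactly your factorization $(1-r_i)(fx_i+m)$ combined with $1-r_i>\frac{1}{\arat}$. The only cosmetic difference is that you phrase the bound as a uniform per-packet ratio before summing, while the paper states the inequality directly.
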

\begin{proof}
Recall that rejecting a packet $x_i$ incurs a cost of $fx_i + m$.
If $x_i$ is little-accepted, the cost incurred by the linear program solution is
$f\cdot(x_i - y_i) + m\frac{x_i - y_i}{x_i} \ge \frac{f x_i}{\arat} +\frac{m}{\arat} = \frac{1}{\arat} (fx_i + m)$.
\end{proof}

\begin{algorithm}[t]
  \begin{algorithmic}[1]
    \Require packet sequence $X_t$, capacity $\capacitySum$, solution of LP:$S_{L,i}, S_{R,i}, y_i$.
    \Ensure decisions to accept or reject
    \State $R_{L} = \capacitySum\frac{1 + \hibu}{2}$, $R_{R} = \capacitySum\frac{1 + \hibu}{2}$
    \For{$i \in [t]$}
     \If {$x_i \in \xlr$}
      \If{$R_L - (x_i - y_i) \ge \frac{\hibu}{2}$}
        \State \textbf{Accept}
        \State $R_L = R_L - (x_i-y_i)$
        \State $R_R = R_R + (x_i-y_i)$
      \ElsIf{$x_i$ is little-accepted}
        \State \textbf{Reject}
        \State $R_L = R_L + y_i$
        \State $R_R = R_R - y_i$
      \Else
        \State $\surac, \surre, \undecided, R_L', j \gets \Call{Divide}{R_L,LP,X_t, i}$ \label{line: divide}
        \State $\undecided_R \gets \{\}$
        \If {$R_L' < 0$}
          \State $\undecided_R, R_L' \gets \Call{RejectBig}{X_t, \undecided, R_L'}$ \label{line: reject}
        \EndIf
        \State \textbf{Accept} all $x_i \in \surac \cup (\undecided \setminus \undecided_R)$ \label{line: accept}
        \State \textbf{Reject} all $x_i \in \surre \cup \undecided_R$.
        \State $R_L = R_L'$
        \State $R_R = (1 + \hibu)M - R_L$
        \State $i = j$
      \EndIf
      \EndIf
    \EndFor
  \end{algorithmic}
  \caption{$(\arat)$-approximation algorithm}
  \label{Algorithm: Appx}
\end{algorithm}

We distinguish between three phases of the algorithm.
The algorithm is in the \emph{balanced phase} if both $R_L \ge \frac{\hibu}{2}$ and $R_R \ge \frac{\hibu}{2}$.
If $R_L < \frac{\hibu}{2}$, we say the algorithm is in the \emph{left phase}, and if $R_R < \frac{\hibu}{2}$, we say the algorithm is in the \emph{right phase}.

In the balanced phase, \Cref{Algorithm: Appx} accepts all almost-accepted packets and those little-accepted packets that allows it to stay in the balanced phase.
When the algorithm is forced to leave the balanced phase and enters the left-phase (or right-phase), it looks at future packets and tries to accept all almost-accepted packets.
If this is not possible, the algorithm rejects some of them such that both of the following two conditions hold: first, the approximation ratio remains $\arat$, and second, the algorithm returns to a balanced phase.
Right and left phases are handled by the functions \Call{Divide}{} (described in \Cref{Algorithm: Divide}) and \Call{RejectBig}{} (described in \Cref{Algorithm: Reject Big}).

\begin{lemma}\label{Lemma: Leaving}
  \Cref{Algorithm: Appx} never leaves the balanced phase after processing a little-accepted packet.
\end{lemma}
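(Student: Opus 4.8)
The plan is to show that the balanced-phase invariant $R_L, R_R \ge \frac{\hibu}{2}\capacitySum$ is preserved whenever \Cref{Algorithm: Appx}, while in the balanced phase, processes a little-accepted packet $x_i$. The starting observation is that, in the balanced phase, such a packet can be handled by only one of two branches: the accept branch guarded by $R_L - (x_i - y_i) \ge \frac{\hibu}{2}\capacitySum$, or, when that guard fails, the reject branch dedicated to little-accepted packets. The \Call{Divide}{}/\Call{RejectBig}{} branch is entered only by almost-accepted packets and is hence out of scope. By the symmetry of the algorithm (swapping $R_L$ with $R_R$ and $\xlr$ with $\xrl$), I may assume $x_i \in \xlr$, so it suffices to verify that each of these two branches keeps both reserves above the threshold.

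The accept branch is immediate and does not even need the little-accepted hypothesis: its guard asserts that the updated value $R_L - (x_i - y_i)$ is already at least $\frac{\hibu}{2}\capacitySum$, while $R_R$ is increased by $x_i - y_i \ge 0$; hence both reserves stay above the threshold and the algorithm remains balanced.

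The reject branch is the crux. Here the accept guard has failed, so $R_L - (x_i - y_i) < \frac{\hibu}{2}\capacitySum$, and the update is $R_L \gets R_L + y_i$, $R_R \gets R_R - y_i$. Since $y_i \ge 0$, the new $R_L$ only grows, so the entire difficulty is to lower-bound the decreased $R_R$. I would combine the failed guard, which gives $R_L < \frac{\hibu}{2}\capacitySum + (x_i - y_i)$, with the reserve invariant $R_L + R_R = (1+\hibu)\capacitySum$ to obtain $R_R > (1+\hibu)\capacitySum - \frac{\hibu}{2}\capacitySum - (x_i - y_i)$. Subtracting $y_i$ and simplifying with $1 + \hibu = \sqrt{3}$, the target inequality $R_R - y_i \ge \frac{\hibu}{2}\capacitySum$ reduces, after the $y_i$ terms cancel, to exactly $x_i \le \capacitySum$.

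The main obstacle is precisely this last reduction: one must carry the constant $\frac{\hibu}{2}\capacitySum$ through the reserve invariant and check that the threshold terms collapse exactly to the bound $x_i \le \capacitySum$, which then holds by the standing assumption that every packet has weight smaller than $\capacitySum$; this yields $R_R - y_i > \frac{\hibu}{2}\capacitySum$ and keeps the algorithm balanced. I would emphasise that the $y_i$ contributions cancel, so the argument never invokes the little-accepted threshold $\frac{\sqrt{3}}{\arat}$ quantitatively — that hypothesis enters only qualitatively, to certify that the reject branch (rather than \Call{Divide}{}) is the one taken. Finally, the $\xrl$ case follows verbatim by the stated symmetry, completing the argument.
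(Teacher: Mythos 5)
Your proof is correct, and at the top level it follows the same strategy as the paper: split on which branch handles the little-accepted packet, note the accept branch is trivial by its own guard, and in the reject branch combine the failed guard with the reserve invariant $R_L+R_R=(1+\hibu)\capacitySum$ and the standing assumption $x_i\le\capacitySum$ to lower-bound $R_R-y_i$. The difference is in how the reject branch is closed. The paper routes the argument through per-packet ``movement'' bounds derived from the little-accepted condition $\frac{y_i}{x_i}<\frac{\sqrt{3}}{\arat}$ --- it asserts the packet moves at most $\frac{1}{\arat}\capacitySum$ left-to-right and at most $\frac{\sqrt{3}}{\arat}\capacitySum$ right-to-left, and chains the implication ``$R_L-\frac{1}{\arat}\capacitySum<\frac{\hibu}{2}\capacitySum \Rightarrow R_R-\frac{\sqrt{3}}{\arat}\capacitySum\ge\frac{\hibu}{2}\capacitySum$.'' Your version instead substitutes the failed guard directly and observes that the $y_i$ terms cancel, reducing everything to $x_i\le\capacitySum$; this is cleaner and, in fact, more robust, since the paper's left-to-right bound $x_i-y_i\le\frac{1}{\arat}\capacitySum$ does not actually follow from the little-accepted condition (it gives a lower bound on $x_i-y_i$, not an upper bound), so the paper's implication chain has a gap that your cancellation argument avoids entirely. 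Your observation that the little-accepted hypothesis is used only to identify which branch executes, not quantitatively, is a genuine (and correct) sharpening of the paper's reasoning.
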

\begin{proof}
  Each little-accepted packet moves at most $\frac{1}{\arat} \capacitySum$ from the left side to the right side of a link,
  and at most $\frac{\sqrt{3}}{\arat} \capacitySum$ from the right side to the left side of a link.

  Because $R_L + R_R = \sqrt{3}$ and any packet has a weight at most $\capacitySum$.
  If $R_L - \frac{1}{\arat}\capacitySum < \frac{\hibu}{2}$, then $R_R - \frac{\sqrt{3}}{\arat}\capacitySum \ge \frac{\hibu}{2}$.
  
  That means that rejecting a little-accepted packet from $\xlr$ does not create a situation where $R_R < \frac{\hibu}{2}$.
\end{proof}

We have no guarantee that \Cref{Algorithm: Appx} accepts all almost-accepted packets.
Algorithms \Call{Divide}{} and \Call{RejectBig}{} manage capacity when by accepting almost-accepted packets from $\xlr$ leads to $R_L < \frac{\hibu}{2}$.

First, \Call{Divide}{} creates three sets from some future packets: $\surac, \surre, \undecided$.
Set $\surac$ contains all packets from $\xrl$, these will be accepted.
Set $\surre$ contains little-accepted packets from $\xlr$, these will be rejected.
Set $\undecided$ contains almost-accepted packets, some of them will be accepted and some rejected in a way to maintain the approximation ratio.

\Call{Divide}{} creates the sets incrementally.
It simulates accepting packets from $\surac \cup \surre$ and rejecting packets from $\surre$
until one of the following stopping conditions occurs:
\begin{itemize}
  \item the algorithm runs out of capacity ($R_L$ would be smaller than $0$)
  \item the algorithm returns to a balanced phase ($R_L$ would be bigger than $\frac{\hibu}{2}$).
  \item all packets are processed
\end{itemize}

If the first condition holds, the procedure \Call{RejectBig}{} is called and the procedure creates set $\undecided_R \subset \undecided$.
Packets in $\surac$ and $\undecided\setminus \undecided_R$ are accepted and packets in $\surre$ and $\undecided_R$ are rejected.

\begin{lemma}\label{Lemma: Positive}
  If \Call{Divide}{} returns $R_L' \ge 0$ and $j$, all almost-accepted packets between $i$ and $j$ are accepted
  by \Cref{Algorithm: Appx} and either all packets are processed or $R_{R,j} \ge \frac{\hibu}{2}$ and $R_{L,j} \ge \frac{\hibu}{2}$.
\end{lemma}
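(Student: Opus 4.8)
The plan is to handle the two assertions separately, exploiting that a returned value $R_L' \ge 0$ rules out the ``out of capacity'' branch of the procedure. First I would note what $R_L' \ge 0$ means operationally: the branch guarded by $R_L' < 0$ in \Cref{Algorithm: Appx} is not taken, so \Call{RejectBig}{} is never invoked and $\undecided_R = \{\}$. Consequently the algorithm accepts exactly $\surac \cup \undecided$ and rejects $\surre$. By the construction of \Call{Divide}{}, between indices $i$ and $j$ the set $\surac$ contains every packet of $\xrl$ and $\undecided$ contains every almost-accepted packet of $\xlr$, whereas $\surre$ consists only of little-accepted packets of $\xlr$. Hence every almost-accepted packet in this range---left-to-right ones lie in $\undecided$, right-to-left ones lie in $\surac$---is accepted, which is the first assertion.

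For the second assertion I would use the invariant $R_L + R_R = (1+\hibu)\capacitySum = \sqrt{3}\,\capacitySum$, which holds at every step since each accept/reject merely shifts capacity between $R_L$ and $R_R$. The procedure \Call{Divide}{} halts for one of its three stopping reasons, and $R_L' \ge 0$ excludes ``runs out of capacity''. If it halts because all packets are processed, the first disjunct of the claim holds and nothing more is needed. Otherwise it halts because $R_L$ first reaches $\frac{\hibu}{2}\capacitySum$ from below at step $j$, so $R_{L,j} = R_L' \ge \frac{\hibu}{2}\capacitySum$, which already gives the required lower bound on the left reserve. The remaining work is to bound $R_{L,j}$ from above so that the complementary reserve is also large enough.

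The key observation is that the step producing the upward crossing is necessarily an \emph{increase} of $R_L$: a decrease (which occurs when an almost-accepted $\xlr$ packet is accepted) can never push $R_L$ past the threshold. Thus step $j$ is either the acceptance of an $\xrl$ packet or the rejection of a little-accepted $\xlr$ packet, and the corresponding increment equals $x_j - y_j$ or $y_j$, each at most the weight $x_j < \capacitySum$ (all packets have weight below $\capacitySum$). Since just before step $j$ the simulation was still in the left phase, $R_L < \frac{\hibu}{2}\capacitySum$ there, and adding an increment below $\capacitySum$ gives
\[
  R_{L,j} \;<\; \frac{\hibu}{2}\capacitySum + \capacitySum \;=\; \frac{\sqrt{3}+1}{2}\,\capacitySum .
\]
Combining with the invariant yields $R_{R,j} = \sqrt{3}\,\capacitySum - R_{L,j} > \sqrt{3}\,\capacitySum - \frac{\sqrt{3}+1}{2}\capacitySum = \frac{\hibu}{2}\capacitySum$, so both reserves meet the balanced-phase threshold.

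I expect the main obstacle to be this overshoot argument: one must justify that the crossing step is an increase, that the reserve immediately before it lies below $\frac{\hibu}{2}\capacitySum$, and that a single-step increment stays below $\capacitySum$. The constants are calibrated precisely for this---with $\hibu = \sqrt{3}-1$ and total reserve $\sqrt{3}\,\capacitySum$, the quantity $\frac{\hibu}{2}\capacitySum + \capacitySum$ leaves exactly $\frac{\hibu}{2}\capacitySum$ on the opposite side, which is what makes the balanced band just wide enough to absorb one packet while keeping both reserves above threshold.
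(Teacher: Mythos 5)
Your proposal is correct and follows essentially the same route as the paper: since $R_L' \ge 0$ the \Call{RejectBig}{} branch is skipped, $\undecided_R$ is empty, and \Cref{Algorithm: Appx} simply replays the simulated decisions of \Call{Divide}{}, accepting all of $\surac \cup \undecided$ (which contains every almost-accepted packet between $i$ and $j$). Your second half is in fact more complete than the paper's own proof: the paper only asserts the two exit reasons and relegates the claim that $R_{R,j} \ge \frac{\hibu}{2}\capacitySum$ to an informal remark after \Cref{Algorithm: Divide} ("cannot plunge to a right-phase right away"), whereas you supply the missing overshoot bound --- the crossing step is an increase of at most one packet weight $< \capacitySum$ from a value below $\frac{\hibu}{2}\capacitySum$, so $R_{L,j} < \frac{\sqrt{3}+1}{2}\capacitySum$ and the invariant $R_L + R_R = \sqrt{3}\,\capacitySum$ gives $R_{R,j} > \frac{\hibu}{2}\capacitySum$. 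This is exactly the calculation the paper omits, and it is correct.
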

\begin{proof}
  There are two reasons why \Call{Divide}{} returned $R_L' \ge 0$, either $R_L' \ge \frac{\hibu}{2}$ or $j = t$.

  In both cases \Call{Divide}{} simulated accepting all packets from $\surac$ and $\undecided$ and rejecting all packets from $\surre$, and at no time $R_L'$ went below $0$.
  That means that \Cref{Algorithm: Appx} just repeats decisions of \Call{Divide}{}.
\end{proof}

\begin{algorithm}[t]
  \caption{Function \Call{Divide}{} to create sets $\surac, \surre$, and $\undecided$.}
  \label{Algorithm: Divide}
  \begin{algorithmic}[1]
    \Require packet sequence $X_t$, solution of LP :$ S_{L,i}, S_{R,i}, y_i$, value $R_L$, capacity $\capacitySum$, 
    \Ensure sets $\surac, \surre, \undecided$, resulting $R_L$
    \State $R_L = R_L - (x_i - y_i)$
    \State $\surac, \surre, \undecided \gets \{\}, \{\}, \{x_i\}$
    \State $j = i$
    \While{$R_L \ge 0$ and $R_L < \frac{\hibu}{2}$ and $j < t$}
      \State $j = j+1$
      \If{ $x_j \in \xlr$ and $x_j$ is almost-accepted}
        \State $R_L = R_L - (x_j - y_j)$
        \State $\undecided \gets \undecided \cup {x_j}$
      \ElsIf{ $x_j \in \xlr$ and $x_j$ is little-accepted}
        \State $R_L = R_L + y_j$
        \State $\surre \gets \surre \cup {x_j}$
      \Else
        \State $R_L = R_L + (x_j - y_j)$
        \State $\surac \gets \surac \cup {x_j}$
      \EndIf
    \EndWhile
    \State \Return $\surac, \surre, \undecided, R_L, j$
  \end{algorithmic}
\end{algorithm}

Note also that $R_L + R_R = \sqrt{3}$ and all packets are smaller than $\capacitySum$.
That means that \Cref{Algorithm: Appx} after emerging from left-phase cannot plunge to a right-phase right away. 

\begin{algorithm}[t]
  \caption{Function \Call{RejectBig} to prune out packets from $\undecided$.}
  \label{Algorithm: Reject Big}
  \begin{algorithmic}[1]
    \Require packet sequence $X_t$, set $\undecided$, value $R_L'$
    \Ensure set $\undecided_R$, value $R_L'$
    \State $\undecided_R \gets \{\}$
    \While{ $R_L' < \frac{\hibu}{2}$}
      \State $x_k \gets $ biggest packet from $\undecided$
      \State $\undecided \gets \undecided \setminus x_k$
      \State $\undecided_R \gets \undecided_R \cup \{x_k\}$
      \State $R_L' = R_L' + x_k$
    \EndWhile
    \State \Return $\undecided_R, R_L'$
  \end{algorithmic}
\end{algorithm}

To prove that \Call{RejectBig}{} maintains the approximation ratio, we compute the cost incurred by the algorithm on packets from $\undecided_R$ and compare it to the cost of $\opt$ on $\undecided$.

\begin{lemma}\label{Lemma: Appx}
  In \Cref{Algorithm: Appx}, for sets $\undecided$ and $\undecided_R$ on \cref{line: accept} holds
  \[
    (\arat) \sum_{x_i \in \undecided} f \cdot \left( x_i - y_i \right) + m  \frac{x_i - y_i}{x_i} \ge  \sum_{x_i \in \undecided_R} f x_i + m
  \]
\end{lemma}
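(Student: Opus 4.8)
The plan is to peel each summand into a weight factor times a common cost factor, and then to reduce the statement to two self-contained inequalities, both fed by a single capacity-accounting for \Call{Divide}{} and \Call{RejectBig}{}. Writing $f(x_i-y_i)+m\frac{x_i-y_i}{x_i}=\frac{x_i-y_i}{x_i}(fx_i+m)$, the left-hand side becomes $\arat\sum_{x_i\in\undecided}\frac{x_i-y_i}{x_i}(fx_i+m)$ while each rejection on the right costs exactly $fx_i+m$. Every packet of $\undecided$ is almost-accepted, so $\frac{y_i}{x_i}\ge\frac{\sqrt3}{\arat}$ gives $x_i-y_i\le\frac{x_i}{\arat}=\frac{\hibu}{2}x_i$, equivalently $\frac{x_i-y_i}{x_i}\le\frac1\arat$; thus rejecting one almost-accepted packet already costs $\frac{x_i}{x_i-y_i}\ge\arat$ times its own LP cost, so no packet-by-packet argument can work and the content is that $\undecided_R$ is a small, weight-bounded subset of $\undecided$. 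It therefore suffices to prove the two inequalities $\arat\sum_{x_i\in\undecided}(x_i-y_i)\ge\sum_{x_i\in\undecided_R}x_i$ and $\arat\sum_{x_i\in\undecided}\frac{x_i-y_i}{x_i}\ge|\undecided_R|$, since scaling the first by $f$ and the second by $m$ and adding gives the claim.

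Then I would do the bookkeeping. Let $G=\sum_{x_i\in\undecided}(x_i-y_i)$ be the reserve spent by tentatively accepting all of $\undecided$ in \Call{Divide}{}, and let $A\ge0$ be the reserve refunded by the packets routed into $\surac$ and $\surre$. Entering the \textbf{else} branch the algorithm is balanced, so $R_L\ge\frac{\hibu}{2}\capacitySum$, and the value passed to \Call{RejectBig}{} is $R_L'=R_L-G+A$; hence the deficit it must erase is $D=\frac{\hibu}{2}\capacitySum-R_L'=\bigl(\frac{\hibu}{2}\capacitySum-R_L\bigr)+G-A\le G$, while $R_L'<0$ forces $G>\frac{\hibu}{2}\capacitySum$. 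Each rejection refunds $x_k$ to $R_L'$ and the loop stops once $R_L'\ge\frac{\hibu}{2}\capacitySum$; since it removes the heaviest packets first, with $\theta$ denoting the lightest packet of $\undecided_R$ I get
$$D\le W_R:=\sum_{x_i\in\undecided_R}x_i<D+\theta\le G+\theta,\qquad\theta<\capacitySum.$$

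For the first (weight) inequality I use $W_R<G+\theta$ and split on $|\undecided_R|$. If $|\undecided_R|=1$ then $W_R=\theta<\capacitySum=\arat\cdot\frac{\hibu}{2}\capacitySum<\arat G$, where the equality is $\arat\cdot\frac{\hibu}{2}=1$ and the last step uses $G>\frac{\hibu}{2}\capacitySum$. If $|\undecided_R|\ge2$, the packets removed before the last have total weight $<D\le G$ and each is $\ge\theta$, so $\theta<G$; were $\theta>\sqrt3\,G$ we would obtain $\sqrt3\,G<\theta<G$, which is impossible, so $\theta\le\sqrt3\,G=(\arat-1)G$ and $\arat G=G+(\arat-1)G\ge G+\theta>W_R$. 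This disposes of the weight inequality in both cases.

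The cardinality inequality $\arat\sum_{x_i\in\undecided}\frac{x_i-y_i}{x_i}\ge|\undecided_R|$ is the main obstacle, precisely because a rejected packet may have been (almost) fully accepted by the LP and hence contribute almost nothing to the left-hand sum, so the count has to be paid for by the \emph{lighter} packets of $\undecided$. The lever is again the stopping rule: the $|\undecided_R|-1$ heaviest packets have total weight below $G$, giving $|\undecided_R|-1<G/\theta$, while every packet of $\undecided\setminus\undecided_R$ has weight at most $\theta$, so
$$\sum_{x_i\in\undecided\setminus\undecided_R}\frac{x_i-y_i}{x_i}\ge\frac1\theta\sum_{x_i\in\undecided\setminus\undecided_R}(x_i-y_i)=\frac1\theta\Bigl(G-\sum_{x_i\in\undecided_R}(x_i-y_i)\Bigr)\ge\frac1\theta\Bigl(G-\frac{W_R}{\arat}\Bigr).$$
I would feed these two estimates, together with $W_R<G+\theta$, $\theta\le\sqrt3\,G$, and $G>\frac{\hibu}{2}\capacitySum$, into a bound on $|\undecided_R|$; closing the final numerical gap between the refunded weight $W_R$ and the available gap is the delicate point, and I expect it to turn, as in the weight inequality, on the identity $\frac1\arat=\frac{\hibu}{2}$ and on the same $|\undecided_R|=1$ versus $|\undecided_R|\ge2$ case distinction.
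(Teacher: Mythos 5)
Your decomposition into the weight inequality $\arat\sum_{\undecided}(x_i-y_i)\ge\sum_{\undecided_R}x_i$ and the count inequality $\arat\sum_{\undecided}\frac{x_i-y_i}{x_i}\ge|\undecided_R|$ is exactly the paper's, and your proof of the weight half is correct and essentially the paper's argument (your $D$ is the shortfall to $\frac{\hibu}{2}\capacitySum$ rather than the total drop of $R_L$, but the case split on $|\undecided_R|=1$ versus $|\undecided_R|\ge 2$, the bound $W_R<D+\theta$, and the identity $\arat\cdot\frac{\hibu}{2}=1$ all match). The gap is the count inequality, which you explicitly leave open, and the estimates you propose to combine do not close it: to conclude from $|\undecided_R|<1+G/\theta$ you would need $\frac{1}{\theta}\bigl(\arat G-W_R\bigr)\ge 1+\frac{G}{\theta}$, i.e.\ $W_R+\theta\le\sqrt3\,G$, whereas your bounds only give $W_R+\theta<G+2\theta$ with $\theta<G$; this fails whenever $\theta>\frac{G}{\arat}$, so the route as written stalls precisely at the point you flag as delicate.

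The missing idea — the one the paper compresses into ``rejecting smaller packets incurs on average bigger cost'' — is a rearrangement argument that derives the count inequality directly from the weight inequality you already proved. Set $c_i=\arat\frac{x_i-y_i}{x_i}$; almost-acceptance gives $0\le c_i\le 1$, and the weight inequality says $\sum_{\undecided}c_ix_i\ge\sum_{\undecided_R}x_i$. With $\theta=\min_{x_i\in\undecided_R}x_i$, every $x_i\in\undecided\setminus\undecided_R$ has $x_i\le\theta$, so $c_i\ge c_i\frac{x_i}{\theta}$, while every $x_i\in\undecided_R$ has $x_i\ge\theta$ and $c_i-1\le 0$, so $c_i-1\ge(c_i-1)\frac{x_i}{\theta}$. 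Summing,
\[
\sum_{x_i\in\undecided}c_i-|\undecided_R|\;=\;\sum_{x_i\in\undecided\setminus\undecided_R}c_i+\sum_{x_i\in\undecided_R}(c_i-1)\;\ge\;\frac{1}{\theta}\Bigl(\sum_{x_i\in\undecided}c_ix_i-\sum_{x_i\in\undecided_R}x_i\Bigr)\;\ge\;0,
\]
which is the count inequality; the same computation with $fx_i+m$ in place of $x_i$ (using that $\frac{fx+m}{x}$ is decreasing) even gives the lemma in one stroke. So your structure and first half are sound, but the proof is incomplete without this (or an equivalent) step, and the numerical route you sketch for it does not go through.
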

\begin{proof}
  If $R_L' \ge 0$, we know that all almost-accepted packets are accepted from \Cref{Lemma: Positive}.
  For $R_L' < 0$ we prove that $(\arat) \sum_{x_i \in U}  \left( x_i - y_i \right) \ge  \sum_{x_i \in \undecided_R} x_i$, then we argue that the whole theorem holds.

  Let $D = R_{L,i-1} - R_L'$ where $R_L'$ is the value returned by \Call{Divide}{} in \Cref{Algorithm: Appx} on \cref{line: divide}.
  We know that $D \ge \frac{\hibu}{2} \capacitySum$.

  By following the changes of $R_L'$ in \Call{Divide}{}, we get
  \[
    \sum_{x_i \in \undecided} x_i - y_i = D + \sum_{x_i \in \surre} y_i  + \sum_{x_i \in \surac} x_i - y_i
  \]
  By this we know that $\sum_{x_i \in \undecided} x_i-y_i \ge D$.

  Algorithm \Call{RejectBig}{} removes packets from $\undecided$ until $\sum_{x_i \in \undecided_R} x_i \ge D$.
  If the condition is satisfied, we know that \Call{RejectBig}{} returns $\undecided_R$, because $R_L' \ge \frac{\hibu}{2}$.

  If $|\undecided_R| = 1$, we know that $\sum_{x_i \in \undecided_R } x_i \le \capacitySum$, because every $x_i \le \capacitySum$.
  So in that case $\sum_{x_i \in \undecided_R} x_i \le \capacitySum \le (\arat) \frac{\hibu}{2} \capacitySum \le (\arat)D$.

  If $|\undecided \setminus \undecided_R| > 1$, we know that rejecting just one packet is not enough.
  This means the biggest packet has weight at most $D$, so $\sum_{x_i \in \undecided_R} x_i \le 2D \le (\arat)D$.

  Now, we know that \Cref{Algorithm: Appx} rejects less weight than the linear program times $(\arat)$.
  It implies that $(\arat) \sum_{x_i \in U} f \cdot \left( x_i - y_i \right)\ge  \sum_{x_i \in \undecided_R} f x_i$
  and leaves us to prove $(\arat) \sum_{x_i \in U} \frac{x_i - y_i}{x_i} \ge  \sum_{x_i \in \undecided_R} m$.

  But we know that the packets are moved to $\undecided_R$ from the biggest.
  For every $x_k \in \undecided_R$ and $x_l \in \undecided$ holds $\frac{x_l - y_l}{x_l} \ge \frac{x_l - y_l}{x_k}$.
  That means rejecting smaller packets incurrs on average bigger cost than rejecting bigger packets, so $(\arat) \sum_{x_i \in U} \frac{x_i - y_i}{x_i} \ge  \sum_{x_i \in \undecided_R} m$.
\end{proof}

We now have all the necessary ingredients to state and prove our main theorem. 

\begin{theorem}\label{thm:main}
  \Problem can be approximated with a ratio $(1 + \eps)(\arat)$ in time $\calO(n^{\omega}\cdot \frac{1}{\eps} \cdot \log \frac{\accAll}{x_{\min}})$,
  where $\omega$ is the exponent of $n$ in matrix multiplication.
\end{theorem}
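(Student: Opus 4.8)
The plan is to assemble the building blocks of \Cref{sec:building} and \Cref{sec:approx} in three movements: first bound the total cost of \Cref{Algorithm: Appx} for a \emph{fixed} capacity $\capacitySum$ in terms of $LP_{\capacitySum}$ and $\capacitySum$; then invoke \Cref{Lemma: Capacity Guess} to convert the search over the geometric grid $\mathcal M$ into the $(1+\eps)(\arat)$ guarantee against \opt; and finally bound the running time. The algorithm we output is: for each $\capacitySum \in \mathcal M$ solve the linear program and run \Cref{Algorithm: Appx}, then return the cheapest of the resulting solutions.

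Fix $\capacitySum \in \mathcal M$ and split the cost of \Cref{Algorithm: Appx} into a capacity part and a rejection part. The algorithm initializes $R_L=R_R=\capacitySum\frac{1+\hibu}{2}=\capacitySum\frac{\sqrt 3}{2}$ on top of the linear program's capacity $S_{L,i}+S_{R,i}=\capacitySum$; since the total injected capacity is invariant over the run, the capacity cost is exactly $\capacitySum + \capacitySum\sqrt 3 = (\arat)\capacitySum$. For the rejection cost, every rejected packet falls into one of two disjoint groups. Little-accepted packets rejected in the balanced phase (and those placed in $\surre$) each cost at most $\arat$ times what the linear program pays on them by \Cref{Lemma: Little Accepted}, and by \Cref{Lemma: Leaving} such a rejection keeps the algorithm in the balanced phase. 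The almost-accepted packets rejected by \textsc{RejectBig} during a left/right phase form the sets $\undecided_R$; \Cref{Lemma: Positive} guarantees each such phase terminates with either the whole sequence processed or a return to the balanced phase, and \Cref{Lemma: Appx} bounds the cost of $\undecided_R$ by $\arat$ times the linear-program cost on the enclosing set $\undecided$. Because distinct episodes touch disjoint packets, summing these bounds over all episodes charges each packet's linear-program cost at most once, yielding $ALG^R_{\capacitySum}\le (\arat)\,LP_{\capacitySum}$ and hence $ALG_{\capacitySum}\le(\arat)\big(LP_{\capacitySum}+\capacitySum\big)$.

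Next I would feed this into \Cref{Lemma: Capacity Guess}. Since the optimal integral solution with capacity $\capacitySum$ is a feasible point of the linear program, \Cref{lem:lb} gives $LP_{\capacitySum}\le \opt^R_{\capacitySum}$, so $LP_{\capacitySum}$ is an admissible choice of $LB_{\capacitySum}$. The elementary inequality
\[
  LP_{\capacitySum} + \capacitySum \;\le\; (1+\eps)\Big(LP_{\capacitySum} + \tfrac{\capacitySum}{1+\eps}\Big)
\]
then gives $ALG_{\capacitySum}\le (\arat)(1+\eps)\big(LP_{\capacitySum}+\tfrac{\capacitySum}{1+\eps}\big)$ for every $\capacitySum\in\mathcal M$. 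Taking the minimum over $\capacitySum\in\mathcal M$ and applying \Cref{Lemma: Capacity Guess} yields
\[
  \min_{\capacitySum \in \mathcal M} ALG_{\capacitySum} \;\le\; (\arat)(1+\eps)\min_{\capacitySum \in \mathcal M}\Big(LP_{\capacitySum} + \tfrac{\capacitySum}{1+\eps}\Big) \;\le\; (\arat)(1+\eps)\,\opt,
\]
which is exactly the claimed ratio for the cheapest computed solution.

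For the running time, the grid has $|\mathcal M| = \calO\!\big(\tfrac1\eps \log \tfrac{\accAll}{x_{\min}}\big)$ elements, since consecutive capacities differ by a factor $(1+\eps)$ and lie in $[x_{\min},\accAll]$; for each capacity we solve the linear program in $\calO(n^{\omega})$ and run \Cref{Algorithm: Appx} (with \textsc{Divide} and \textsc{RejectBig}) in time dominated by the linear program, giving the stated $\calO\!\big(n^{\omega}\cdot \tfrac1\eps \cdot \log \tfrac{\accAll}{x_{\min}}\big)$ bound. I expect the main obstacle to be the bookkeeping of the second paragraph: one must verify that the capacity budget $(\arat)\capacitySum$ is never violated across the balanced/left/right phase transitions (i.e. all four quantities stay nonnegative), and that the per-episode bounds of \Cref{Lemma: Little Accepted} and \Cref{Lemma: Appx} add up without double counting to the global bound $ALG^R_{\capacitySum}\le(\arat)\,LP_{\capacitySum}$; the remaining algebra and running-time accounting are routine.
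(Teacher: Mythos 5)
Your proposal is correct and follows essentially the same route as the paper's proof: it combines the LP lower bound (\Cref{lem:lb}), the $(\arat)$ bound on both the rejection cost (via \Cref{Lemma: Little Accepted}, \Cref{Lemma: Leaving}, \Cref{Lemma: Positive}, \Cref{Lemma: Appx}) and the injected capacity of \Cref{Algorithm: Appx}, and then the geometric capacity search of \Cref{Lemma: Capacity Guess}, with the same running-time accounting. If anything, your write-up is slightly more explicit than the paper's about how the capacity cost $(\arat)\capacitySum$ and the $(1+\eps)$ factor enter the final bound.
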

\begin{proof}
  We estimate the capacity according to \Cref{Lemma: Capacity Guess} and for every estimate we solve the linear program~\eqref{eq:big} and run \Cref{Algorithm: Appx}.
  The solution is the output of \Cref{Algorithm: Appx} with the smallest cost.

  We know that $x_{\min}(1 + \eps)^{\frac{1}{\eps}\cdot \log \frac{\accAll}{x_{\min}}} \ge \accAll$.
  That means we solve the linear program and run \Cref{Algorithm: Appx} at most $\frac{1}{\eps} \cdot \log \frac{\accAll}{x_{\min}}$ times.

  From \Cref{lem:lb} we know that the solution of the linear program with parameter $\capacitySum$ is a lower bound for the $\opt_{\capacitySum}^R$.

  From \cref{Lemma: Accept Full}, we know that \Cref{Algorithm: Appx} accepts all fully-accepted packets.
  The algorithm can reject any little-accepted packets by \Cref{Lemma: Little Accepted}. In balanced phase it accepts all almost-accepted packets
  and never leaves the phase after seeing little-accepted packets (\Cref{Lemma: Leaving}).
  \Cref{Lemma: Appx} says even in a left (or right) phase the approximation ratio on almost-accepted packets is $\arat$.
  This means \Cref{Algorithm: Appx} is $(\arat)$-approximation algorithm for the solution of the linear program.
  Moreover, the algorithm uses $\arat$ times more capacity that the linear program.

  Using \Cref{Lemma: Capacity Guess}, we find that the selected solution is a $(1 + \eps)(\arat)$-approximation of the \problem problem.
\end{proof}

\section{Hardness}\label{sec:hardness}
In this section, we show that \problem is generally NP-hard.
\begin{theorem}
  \Problem is NP-hard.
\end{theorem}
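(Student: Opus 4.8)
The plan is to reduce from \textsc{Subset Sum}: given positive integers $a_1,\dots,a_n$ with total $A=\sum_i a_i$ and a target $T$, decide whether some subset sums to exactly $T$. The first step is a structural reformulation that strips the problem down to a single combinatorial choice. Fix any candidate set of accepted packets and let $d(t)$ be the net left-to-right weight of the accepted packets among the first $t$ arrivals, with $d(0)=0$. The left capacity after $t$ packets is $b_u-d(t)$ and the right capacity is $b_v+d(t)$, so feasibility is equivalent to $b_u\ge\max_t d(t)$ and $b_v\ge-\min_t d(t)$. Hence the smallest total capacity that lets us accept exactly this set is the \emph{range} $\max_t d(t)-\min_t d(t)$, attained by splitting $M$ accordingly. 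The cost to be minimised therefore becomes $\big(\max_t d(t)-\min_t d(t)\big)+\sum_{\text{rejected }x_i}\!\big(f x_i+m\big)$, and the only remaining freedom is which packets to accept.

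For the gadget I would place a single heavy packet of weight $T$ travelling from $u$ to $v$, followed by $n$ packets of weights $a_1,\dots,a_n$ all travelling from $v$ to $u$, and fix the global parameters to $m=0$ and $f=\tfrac34$ (a sufficiently small $m>0$ works identically, as argued below). With $m=0$ the rejection cost is $f\cdot(\text{total weight}-\text{accepted weight})$, so minimising cost is the same as maximising $O=(\text{accepted weight})-\tfrac1f\cdot(\text{range})$. A short case analysis on the three possibilities---the heavy packet rejected; accepted with items summing to $s\le T$; accepted with $s>T$---shows, via the range characterisation, that $O$ is maximised by accepting the heavy packet together with a set of items whose sum is as close to $T$ as possible, and that this maximum equals $\tfrac{2T}{3}$ exactly when some subset sums to $T$. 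Because the $a_i$ are integers, if no subset hits $T$ then the best achievable $s$ differs from $T$ by at least $1$, which forces $O\le\tfrac{2T}{3}-\tfrac13$. Translating back through the affine relation $\text{cost}=\tfrac34(T+A)-\tfrac34 O$, there is a subset summing to $T$ iff $\mathrm{OPT}\le\tfrac34(T+A)-\tfrac{T}{2}$, with an additive gap of $\tfrac14$ separating yes- and no-instances; the whole construction is plainly polynomial.

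The step I expect to require the most care is tuning the reduction to the fact that $f$ and $m$ are \emph{global} parameters shared by every packet: one cannot simply make a designated packet prohibitively expensive to reject. This is exactly what dictates the choice $f\in(\tfrac12,1)$, since $f<1$ is needed so that the range penalty makes overshooting $T$ (and hence over-accepting items) costly, while $f>\tfrac12$ is needed so that accepting the heavy packet together with a perfect subset strictly beats rejecting everything. The remaining work is the clean separation: I must verify the integrality-gap inequalities in all three regimes so that the threshold $\tfrac34(T+A)-\tfrac{T}{2}$ decides the instance correctly, and confirm that replacing $m=0$ by a sufficiently small $m>0$ (for instance smaller than the gap divided by $n+1$) perturbs the objective by less than $\tfrac14$ and therefore preserves both directions of the reduction. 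Soundness and completeness then follow directly from the range characterisation together with the case analysis.
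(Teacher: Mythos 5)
Your reduction is correct and is essentially the paper's own: both reduce from \textsc{Subset Sum} with $m=0$ and $f=\tfrac34$, pitting a single packet of the target weight against the item packets travelling in the opposite direction (you merely reverse the temporal order and orientation of the gadget, which is immaterial by symmetry). Your explicit ``range'' characterization of the minimal feasible capacity and the resulting three-case analysis with the additive gap of $\tfrac14$ make the yes/no separation somewhat more transparent than the paper's terser inequality, but the underlying idea, constants, and threshold are the same.
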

\begin{proof}
  We show a reduction from the subset sum problem, which is known to be NP-hard~\cite{complexitybook}. 
  In the subset sum problem, we are given a multiset of integers $\mathcal{I} := \{i_1,i_2,\dots, i_n\}$ and a target integer $S$.
  The goal is to find a subset of $\mathcal{I}$ with a sum of $S$.
  
  Consider the following question in the \problem problem: "is the cost below a given value?" 
  We show this question is NP-hard.

  We set the constants to $m = 0$ and $f = \frac{3}{4}$.
  We create a packet sequence consisting of $i_1, i_2, \dots, i_n$ where the $j$th packet in the sequence has weight $i_j$ which is the $j$th element in $\mathcal{I}$.
  These packets all go from left to right.
  Then we add a packet of weight $S$ going from right to left.

  Suppose that there exists $\mathcal{I}' \subseteq \mathcal{I}$, such that $\sum_{j \in \mathcal{I}'} i_j = S$.
  Then we show that the cost is at most $\frac{1}{4}S + \frac{3}{4} \sum_{j \in \mathcal{I}} i_j$.

  The solution reaching that cost is as follows: players start with capacity $S$ on the right and accept all packets from $\mathcal{I}'$ and then accept the last packet of weight $S$.
  The cost is then $S + \frac{3}{4} \sum_{j \in \mathcal{I} \setminus \mathcal{I}'} i_j$.
  Since $\sum_{j \in \mathcal{I}'} i_j = S$, the bound holds.

  Now, suppose that there is no subset of $\mathcal{I}$ summing to $S$.
  Let $\mathcal{A} \subseteq \mathcal{I}$ be any set with sum $A$
  The cost for the packets going from left to right is $A + \frac{3}{4} \sum_{j \in \mathcal{I} \setminus \mathcal{A}} i_j = \frac{1}{4}A + \frac{3}{4} \sum_{j \in \mathcal{I}} i_j$.
  Depending whether the last packet was accepted or rejected; or $A < S$ or $A > S$, we need to add $\min(\max(S-A, 0), \frac{3}{4}S)$.
  Since $A \neq S$, we know that 
  \begin{align*}
    \frac{1}{4}A + \frac{3}{4} \sum_{j \in \mathcal{I}} i_j + \min(\max(S-A, 0), \frac{3}{4}S) &> \frac{1}{4}S + \frac{3}{4} \sum_{j \in \mathcal{I}} i_j\\
    \min(\max(S-A, 0), \frac{3}{4}S) &> \frac{1}{4}(S-A)\\
  \end{align*}
  which means that \problem is NP-hard.
\end{proof}

\section{Extensions}\label{sec:extensions}
We highlight two natural and interesting directions to generalise our approach from a link to a network. 

\subsection{Cyclic redistribution of capacity to reduce cost}
Suppose player $u$ on link $(u,v)$ is incident to $\ge 2$ links (let us call one of the incident links $(u,w)$). 
From our definition of rechargeable links (see \Cref{sec:intro}), we know it is not possible for $u$ to increase the capacity on the $(u,v)$ link by transferring excess capacity from $(u,w)$. 
However, if $(u,v)$ and $(u,w)$ are part of a larger cycle in the network, $u$ can send excess capacity from link to link in a cyclic fashion starting from $(u,w)$ link and ending at $(u,v)$ while maintaining the invariant that the total capacity on each link as well as the sum of all the capacities of a player on their incident links remains the same. 
This can be done at any point in time without the need to transfer packets. 
We call this cyclic redistribution (note this is possible in payment channel networks \cite{PickhardtN20, KhalilG17, AvarikiotiPSSTY21}) and illustrate it with an example in \Cref{fig:offchain}.
In some situations, especially if the cost of destroying and recreating a link is extremely large, the possibility of cheaply shifting capacities in cycles can reduce the overall cost of the algorithm.

Let us denote the cost of decreasing capacities by $x$ on the right and increasing it by $x$ on the left using cyclic redistribution by $C(fx +m)$ for some $C \ge 1$ (one can view $C$ as a function of the length of the cycle one sends the capacities along).

Here, we sketch an approximation algorithm that solves the \problem problem with the possibility of cyclic redistribution.
Note that our sketch is not precise, we simply modify \Cref{Algorithm: Appx} where constants are already optimised for the basic problem.

We modify the linear program by adding variables $o_i, i \in [t]$ with constraints $0 \le o_i \le \capacitySum$.
The variable $o_i$ denotes the capacity that was shifted from one side to the other before the algorithm processes packet $x_i$.
We also modify the capacity constraints in the following way (for the case where $x_{i-1} \in \xrl$ and $x_{i} \in \xrl$): $S_{L, i} =  S_{L,i-1} - o_i + y_{i-1}$ and $S_{R,i} = S_{R,i-1} + o_i - y_{i-1}$.
We change the signs of variables for the other cases. 
Finally, we add $\sum_{i} C(fo_i + \frac{o_i}{M} m)$ to the objective in \Cref{eq:big}.

We divide the algorithm into epochs. 
We sum all $o_i$ in the current epoch.
If the sum is above $\frac{1}{1+\sqrt{3}}\capacitySum$ we perform cyclic distribution if needed and start a new epoch.
Note that in the current epoch, $\opt$ already paid at least $Cf\frac{\capacitySum+m}{1+\sqrt{3}}$,
so, we can move $\capacitySum$ capacity, incurring a cost at most $1+\sqrt{3}$ times bigger than $\opt$ for cyclic redistribution.

To deal with capacity changes inside each epoch, we increase $R_L$ and $R_R$.
We initialise them in a way that they absorb changes of capacity in the first epoch of our algorithm.
After an epoch, we reset them by cyclic redistribution such that they absorb changes of capacity in the next.
The increase in $R_L$ and $R_R$ is at most $\frac{1}{1+\sqrt{3}}\capacitySum$.
These changes increase the approximation ratio of our algorithm from $\arat$ to $\arat + \frac{1}{\arat}$, which is $\frac{1 + 3 \sqrt{3}}{2}$.

\begin{figure}[htb!]
    \centering
    \includegraphics[]{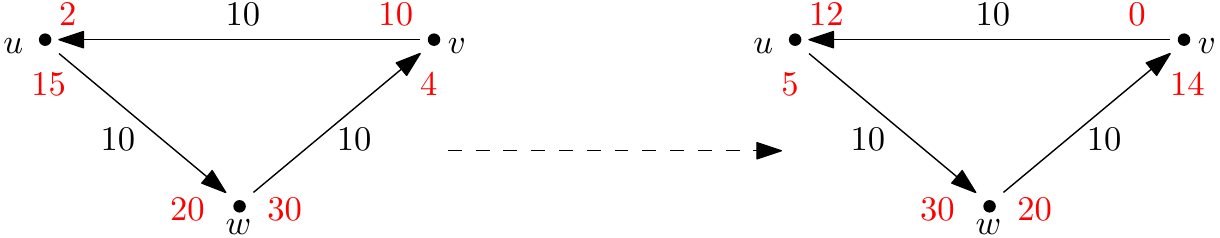}
    \caption{The graph on the left depicts three players $u,v,w$ connected in a cycle. The red numbers by each link represent the capacity of a player in a certain link. Player $u$ can increase their capacity by $10$ on the $(u,v)$ link by first sending the excess capacity of $10$ to $w$ along the $(u,w)$ link. Then player $w$ sends the excess capacity of $10$ to $v$ along $(w,v)$. Finally, player $v$ sends capacity of $10$ back to $u$ along $(v,u)$. The graph on the right depicts the updated capacities of each player on each link after cyclic redistribution.}
    \label{fig:offchain}
\end{figure}

\subsection{Graphs with a few long packet requests}

We can extend weighted packet selection from a single link to a network.
We first modify the definition of a packet such that, in addition to having a non-negative weight, it now has a sequence of (directed) links it needs to be processed by. 
These links form the path that packet takes through the network. 
The packet has to be be accepted or rejected by \emph{all} the links in its routing path.
We call a packet \emph{long} if it passes through more than one link.

Thus far, we showed an approximation algorithm that solves the problem if all packets only go through $1$ link.
Suppose we are given the situation where we can bound the number of long packets, say by $\ell$.
Given a network and packet sequence for which we know only $\ell$ are long, we can approximate the extended problem with approximation ratio $(1 + \eps)(\arat)$
in time $2^{\ell}$ times the time needed for one link by simply trying all to accept all subsets of paths of long packets.
If the packet is accepted or rejected, we can reflect it in the linear program by requiring $y_i = x_i$.
Then \Cref{Algorithm: Appx} surely accepts this packet and the condition that the packet needs to be accepted by all links it passes through is satisfied.

\section{Discussion and future work}\label{sec:future}

We initiated the study of weighted packet selection over a rechargable capacitated link, a natural  algorithmic problem e.g.,
describing the routing of financial transactions in cryptocurrency networks.
We showed that this problem is NP-hard and provided a constant factor approximation algorithm.

We understand our work as a first step, and believe that it opens several interesting avenues for future research.
In particular, it remains to find a matching lower bound for the achievable approximation ratio, and to study
the performance of our algorithm in practice.
More generally, it would be interesting to study the online version of the weighted packet selection problem,
and explore competitive algorithms. This version of the problem, when extended to a network, can be seen as a novel 
version of the classic online call admission problem~\cite{aspnes1997line}.

\noindent \textbf{Acknowledgments.}
We thank Mahsa Bastankhah and Mohammad Ali Maddah-Ali for fruitful discussions about different variants of the problem.

\bibliography{refs}

\begin{thebibliography}{10}

\bibitem{raiden}
Raiden network.
\newblock \url{https://raiden.network/}, 2017.

\bibitem{complexitybook}
Sanjeev Arora and Boaz Barak.
\newblock {\em Computational Complexity: A Modern Approach}.
\newblock Cambridge University Press, USA, 1st edition, 2009.

\bibitem{aspnes1997line}
James Aspnes, Yossi Azar, Amos Fiat, Serge Plotkin, and Orli Waarts.
\newblock On-line routing of virtual circuits with applications to load
  balancing and machine scheduling.
\newblock {\em Journal of the ACM (JACM)}, 44(3):486--504, 1997.

\bibitem{AvarikiotiPSSTY21}
Zeta Avarikioti, Krzysztof Pietrzak, Iosif Salem, Stefan Schmid, Samarth
  Tiwari, and Michelle Yeo.
\newblock Hide and seek: Privacy-preserving rebalancing on payment channel
  networks.
\newblock In {\em Proc. Financial Cryptography and Data Security (FC)}, 2022.

\bibitem{chekuri2004all}
Chandra Chekuri, Sanjeev Khanna, and F~Bruce Shepherd.
\newblock The all-or-nothing multicommodity flow problem.
\newblock In {\em Proceedings of the 36th Annual ACM Symposium on Theory of
  Computing (STOC)}, pages 156--165, 2004.

\bibitem{CohenLS21}
Michael~B. Cohen, Yin~Tat Lee, and Zhao Song.
\newblock Solving linear programs in the current matrix multiplication time.
\newblock {\em J. {ACM}}, 68(1):3:1--3:39, 2021.

\bibitem{decker2015fast}
Christian Decker and Roger Wattenhofer.
\newblock A fast and scalable payment network with bitcoin duplex micropayment
  channels.
\newblock In {\em Symposium on Self-Stabilizing Systems (SSS)}, pages 3--18.
  Springer, 2015.

\bibitem{csur21crypto}
Maya Dotan, Yvonne-Anne Pignolet, Stefan Schmid, Saar Tochner, and Aviv Zohar.
\newblock Survey on blockchain networking: Context, state-of-the-art,
  challenges.
\newblock In {\em Proc. ACM Computing Surveys (CSUR)}, 2021.

\bibitem{Gupta2000TheCO}
Piyush~Kumar Gupta and Panganamala~Ramana Kumar.
\newblock The capacity of wireless networks.
\newblock {\em IEEE Trans. Inf. Theory}, 46:388--404, 2000.

\bibitem{KhalilG17}
Rami Khalil and Arthur Gervais.
\newblock Revive: Rebalancing off-blockchain payment networks.
\newblock In Bhavani~M. Thuraisingham, David Evans, Tal Malkin, and Dongyan Xu,
  editors, {\em Proceedings of the 2017 {ACM} {SIGSAC} Conference on Computer
  and Communications Security, {CCS} 2017, Dallas, TX, USA, October 30 -
  November 03, 2017}, pages 439--453. {ACM}, 2017.

\bibitem{PickhardtN20}
Rene Pickhardt and Mariusz Nowostawski.
\newblock Imbalance measure and proactive channel rebalancing algorithm for the
  lightning network.
\newblock In {\em {IEEE} International Conference on Blockchain and
  Cryptocurrency, {ICBC} 2020, Toronto, ON, Canada, May 2-6, 2020}, pages 1--5.
  {IEEE}, 2020.

\bibitem{poon2015lightning}
Joseph Poon and Thaddeus Dryja.
\newblock The bitcoin lightning network: Scalable off-chain instant payments.
\newblock \url{https://lightning.network/lightning-network-paper.pdf}, 2015.

\bibitem{raghavan1985provably}
Prabhakar Raghavan and Clark~D Thompson.
\newblock Provably good routing in graphs: regular arrays.
\newblock In {\em Proceedings of the seventeenth annual ACM symposium on Theory
  of computing}, pages 79--87, 1985.

\end{thebibliography}

\end{document}